\begin{document}

\title[Predicting football tables]{Predicting football tables by a maximally parsimonious model}

\author[Haugen]{Kjetil K. Haugen}
\address{Faculty of Logistics,  Molde University College, Specialized University in Logistics, Britveien 2, 6410 Molde, Norway}
\email{kjetil.haugen@himolde.no}

\author[Owren]{Brynjulf Owren}
\address{Department of Mathematical Sciences, NTNU, N-7034 Trondheim, Norway}
\email{brynjulf.owren@ntnu.no}

\thanks{Thanks to associate professors Halvard Arntzen and Knut P. Heen at Molde University College for useful discussion in preliminary phases of this project. A big thanks also to Lars Aarhus for running the RSSSF Norwegian Football Archive ({\tt www.rsssf.no}) providing running round by round tables in the Norwegian football leagues. All data used in the empirical parts of this article, alongside some useful Fortran 90 programs for football table data management, are availble from the authors upon request.}

\keywords{regression analysis, expected MAD, football table prediction, goal difference}

\subjclass{62J05, 47N30, 60C05}{}      

\begin{abstract}
This paper presents some useful mathematical results involved in football table prediction. In addition, some empirical results indicate that an alternative methodology for football table prediction may produce high quality forecasts with far less resource usage than conventional methods.
\end{abstract}

\newtheorem{theorem}{Theorem}[section]
\newtheorem{corollary}[theorem]{Corollary}
\newtheorem{lemma}[theorem]{Lemma}
\newtheorem{proposition}[theorem]{Proposition}

\theoremstyle{definition}
\newtheorem{definition}[theorem]{Definition}
\newtheorem{problem}[theorem]{Problem}
\newtheorem{example}[theorem]{Example}
\newtheorem{remark}[theorem]{Remark}

\numberwithin{equation}{section}

\maketitle
\sloppy

\section{Introduction}\label{sec:Int}

Former England international, long-time Arsenal player and present SKYSPORTS commentator Paul Merson, predicts the final Premiel League (PL) table for the 2016$/$2017 season in~\cite{Merson}. With the hindsight of time, we can check Merson's predictions compared with the true final table as indicated in table~\ref{tab:table_1}.

\begin{table}[htpb]
\begin{center}
\caption{Paul Merson's predictions compared to true final Premier League table.}
\label{tab:table_1}
\begin{tabular}{ll|c|c} 
& \em{Final PL-table} & \em{Merson's predictions} & $|P(i)-i|$ \\
\hline
1.  & Chelsea	   		& 1	&	0 \\
2. & Tottenham	        & 6	&	4 \\
3. & Manchester City 	& 2	&	1 \\
4. & Liverpool 			& 5	&	1 \\
5. & Arsenal 			& 4	&	1 \\
6. & Manchester United	& 3	&	3 \\
7. & Everton			& 8	&	1 \\
8. & Southampton 		& 9	&	1 \\
9. & Bournemouth		& 18	&	9 \\
10. & West Bromwich	& 17	&	7 \\
11.& West Ham		& 7	&	4 \\
12. & Leicester 		& 11	&	1 \\
13. & Stoke	 		& 10	&	3 \\
14. & Crystal Palace		& 12	&	2 \\
15. & Swansea		 	& 15	&	0 \\
16. & Burnley		 	& 20	&	4 \\
17. & Watford		 	& 16	&	1 \\
18. & Hull				& 19	&	1 \\
19. & Middlesbrough 	& 13	&	6 \\
20. & Sunderland	 	& 14	&	6 \\
\end{tabular}
\end{center}
\end{table}

In table~\ref{tab:table_1}, the final table outcome is given in the leftmost column ({\em Final PL-table}), while Merson's predictions are given in the mid column ({\em Merson's predictions}). By defining the true (correct) final table as the consecutive integers $\{1, 2, \ldots, n\footnote{$n$ is the number of teams in the league.}\}$, and a table prediction as a certain permutation $P(i)$\footnote{In this example, $P(i)$ denotes Merson's predictions.} of the integers $\{1, 2, \ldots, n\}$, the absolute deviations between forecasts and true values can be computed as in the rightmost column in table~\ref{tab:table_1}.

If we examine Merson's tips closer, we observe that he obtained two zeros (or perfect hits) in the rightmost column in table~\ref{tab:table_1} -- Chelsea as the winner, and Swansea as number 15. Furthermore, he only missed by one placement for 8 outcomes, but also missed with greater margin for instance for Bournemouth which he thought should finish at 18th place, but in fact ended 9th.

The question that we will be interested in initially, is the quality of Merson's predictions. Is Merson's permutation in table~\ref{tab:table_1} a good guess? In order to attempt to answer such a question, we need to define quality. it seems reasonable to look for some function;

\begin{equation}
f(|P(1)-1|, |P(2)-2|, \ldots, |P(n)-n|)
\end{equation}

which produces a single numerical value tailored for comparison. Of course, infinite possibilities exist for such a function. Fortunately, forecasting literature comes to rescue -- refer for instance to~\cite{Makr}. The two most common measures used in similar situations are $MAE$, Mean Absolute Error or $MSE$, Mean Squared Error. With our notation, these two measures are defined as:

\begin{equation}\label{eq:MAE}
MAE = \frac{1}{n} \sum_{i=1}^n |P(i)-i| \mbox{,   } MSE = \frac{1}{n}  \sum_{i=1}^n  \left( P(i)-i \right)^2
\end{equation}

Although $MSE$ is more applied in statistics, preferably due to its obvious nicer mathematical properties\footnote{Strictly convex for instance.}, we choose to use MAE. It weighs errors equally, and it  produces also an easily interpretable result; a MAE of 3 means that a prediction on average misplaces all teams by 3 places.

Given this choice, Merson's $MAE$ in table~\ref{tab:table_1} can be easily calculated as; $MAE=2.8$. Still, we are in no position to give any statements on the quality of this $MAE$ of 2.8. 

One obvious alternative way to answer our initial question, would be to gather information on other predictions, internet is indeed full of them~(\cite{FT},\cite{Forbes},~\cite{BBC}), calculate $MAE$ for these guesses and compare. Unfortunately, this is indeed a formidable task. As a consequence, we have chosen a slightly different path. Instead of empirical comparisons, we can investigate some basic statistical properties\footnote{Under an assumption of a random guess.} of MAE; for instance to establish minimal ($MAE_{MIN}$), maximal ($MAE_{MAX}$) as well as the expected value for MAE ($E[MAE]$) as a simpler (or at least less time consuming) way of testing the quality of Merson's predictions. It turns out that\footnote{Refer to Appendix~\ref{sec:appa} for the derivation of these results as well as some other relevant statistical properties of $MAE$.} $MAE_{MIN}=0$, $MAE_{MAX}=\frac{n}{2}$ and $E[MAE]= \frac{1}{3} \cdot \frac{n^2-1}{n}$.

The above results provide interesting information. A random table permutation (or prediction) for PL ($n=20$) can at best produce $MAE=0$, while at worst, it can produce $MAE=\frac{n}{2}= \frac{20}{2}=10$. On average, a random prediction should produce $E[MAE]=\frac{1}{3} \cdot \frac{n^2-1}{n} = \frac{1}{3} \cdot \frac{20^2-1}{20}=6.65$.

The task of guessing randomly and hit the correct table is definitely a formidable one. There are $n!$ different tables to guess, and a random guess would hence (in the case of PL, $n=20$)  have a probability of $\frac{1}{20!} = \frac{1}{2432902008176640000} \approx 4 \cdot 10^{-19}$ of hitting the correct table. Consequently, Merson's table prediction is really impressive. On average, a $MAE$ of 6.65 compared to Merson's 2.8 indicate high quality in Merson's prediction. One could of course argue that Paul Merson is an expert, and one should expect him to know this business\footnote{Some might even argue that PL is a league with low uncertainty of outcome (a competitively imbalanced league). Hence, it is not that hard to guess final tables. Chelsea, Arsenal and Manchester United have for instance a recurring tendency to end up among the 5 best.}. Still, information from other countries, for instance Norway, which we will focus more on in subsequent sections, indicate that even experts may have challenges in providing tips that fit final tables.

In the next section (section~\ref{sec:lit}), we investigate some scientific attempts to produce football table forecasts. In section~\ref{sec:fcast} we argue that the trend in present research seems to be oriented in a non-parsimonious fashion, and argue why this perhaps is not a good idea. In section~\ref{sec:hyp} , we discuss alternative parsimonious modelling hypotheses and test one involving goal-difference as the prime explanatory factor. Section~\ref{sec:conc} concludes, and discusses and suggests further research.

\section{The science of football table prediction}\label{sec:lit}
Although the internet is ``full'' of football table predcitions, it would be an exaggeration to state that research literature is full of serious attempts to predict the same. Still, some noteworthy exceptions exist. Three relatively recent papers by Brillinger~\cite{Brill1},~\cite{Brill2},~\cite{Brill3} seem to sum up state of the art of the area. Brillinger's touch of difference compared to other previous work seems to be that he models game outcomes in the form of Win Tie or Loss -- W, T, L -- directly, as opposed to other authors who uses some distributional assumptions on goal scoring frequencey, typically as seen in~\cite{Lee} or in~\cite{Meeden}\footnote{Meeden's paper does get some interesting and perhaps unexpected criticism in~\cite{HaugenChance}. Here, the whole assumption of using probability theory to model goal scoring or match outcomes is questioned by game theoretic arguments.}.

Almost all of the work discussed above rely on simulation to produce actual forecasts. The idea is simple. Let the computer play the games; either by drawing goal scores  or W, T, L (by estimated probailistic mechanisms) for all predefined matches in the league. Register match outcomes; either by counting goals or more directly by Brillingers approach. Then, when all match outcomes are defined, the league table can be set-up. Repeating the simulation produces a new final league table, and by a large number of simulation runs, expected table placement or probabilistic table predictions can be generated. Of course, such a method opens up for updating or reestimating underlying probabilties for team quality, which then are applied if one runs a rolling horizon approach. Such rolling horizon approaches seem to be quite popular in media -- refer for instance to~\cite{Sig}.

\section{Parsimonious forecasting}\label{sec:fcast}

The concept of parsimony  (or parameter minimzation) is both well known and well studied in time series forecasting literature. Already Box and Jenkins~\cite{Box} pointed out that parsimony is desireable if forecasting accuracy is the objective. An interesting emprical test of the actual consequences of parsimony versus non-parsimomny can be found in~\cite{Ledolter}.

The reason why parsimony is desirable is obvious. A model where many parameters need to be estimated generate more  aggregate uncertainty than a model with fewer parameters. As a consequence, the outcome -- the forecasts -- tends to be more uncertain and inaccurate.

Furthermore, in many cases where causal (regression type models) are used, either alone or in combination with time series models, the causal variables will often have to be predicted in order to obtain model estimates for the target variable. And, these causal variables are typically just as hard, or (perhaps) even harder to predict reasonably correct, than the target variable. Suppose you want to predict the number of flats sold in a certain area in London this month next year. You know that many relevant economic variables like UK salary level,  unemployment rate and Interest rates (just to name few) affect this target variable. If a prediction model contains these variables, you need to predict them in order to predict the number of flats sold in London next year. And, predicting next years UK unemployment, salary as well as interest rates are (obviously) not an easy task.

If we return back to our focus -- football table prediction -- it should seem quite obvious that the reported methodolgy discussed in section~\ref{sec:lit}, hardly can be described as parsimonious. On the contrary, probability estimates of many teams, maybe conditional on future events like injuries, or talent logistics shoud generate much added uncertainty and it should not come as a surprise that such methods produce quite bad predictions. The fact that the team in~\cite{Sig} missed Greece as a potential winner in EURO 2004 may serve as an adequate example.

This said, non-parsimonious models, either causal or not, have other interesting properties, they can for instance (far better) answer questions of 'what if type', which in some situations are more desirable than accurate forecasts.

So, what would be a parsimonious model for football table forecasting? The answer is simple and obvious, the table itself. Either last years table, if one predicts the final table in-between seasons, or the latest table available if the aim is to predict the final table within a rolling horizon. 

Obviously, even such a simple strategy holds challenges. In most leagues there are relegation and promotion which has the obvious effect that last season's table contains a few other teams than this season's table. Furthermore, if the tables that are to be predicted are group tables in say European or World Championships, there is no previous season. 

Still, such problems may be solved at least if we restrict our focus to prediction after some games or rounds are played. 

\section{Testing a hypothesis of parsimonious football table prediction}\label{sec:hyp}

One simple way of testing the table in a certain round $r$'s predictive power on the final table is to perform a set of linear regressions, one for each round with the final table rank as the dependent variable and table rank in round $r$ as the independent variable. Or, in our notation: (the obvious $r$-subscript is omitted for simplicity)

\begin{equation}
i = \beta_0 + \beta_1 P(i) + \epsilon_i
\end{equation}

By calculating $R^2$ in all these regressions, a function $R^2(r)$ is obtained. Presumably, this function will have some kind of increasing pattern (not necessarily strict), but common sense indicates that football tables change less in later than early rounds. Figure shows an example from last years Tippeliga\footnote{The name Tippeligaen has been changed to Eliteserien for this (2017) season.} in Norway.

\begin{figure}[htpb]
\centerline{\includegraphics[scale=0.40]{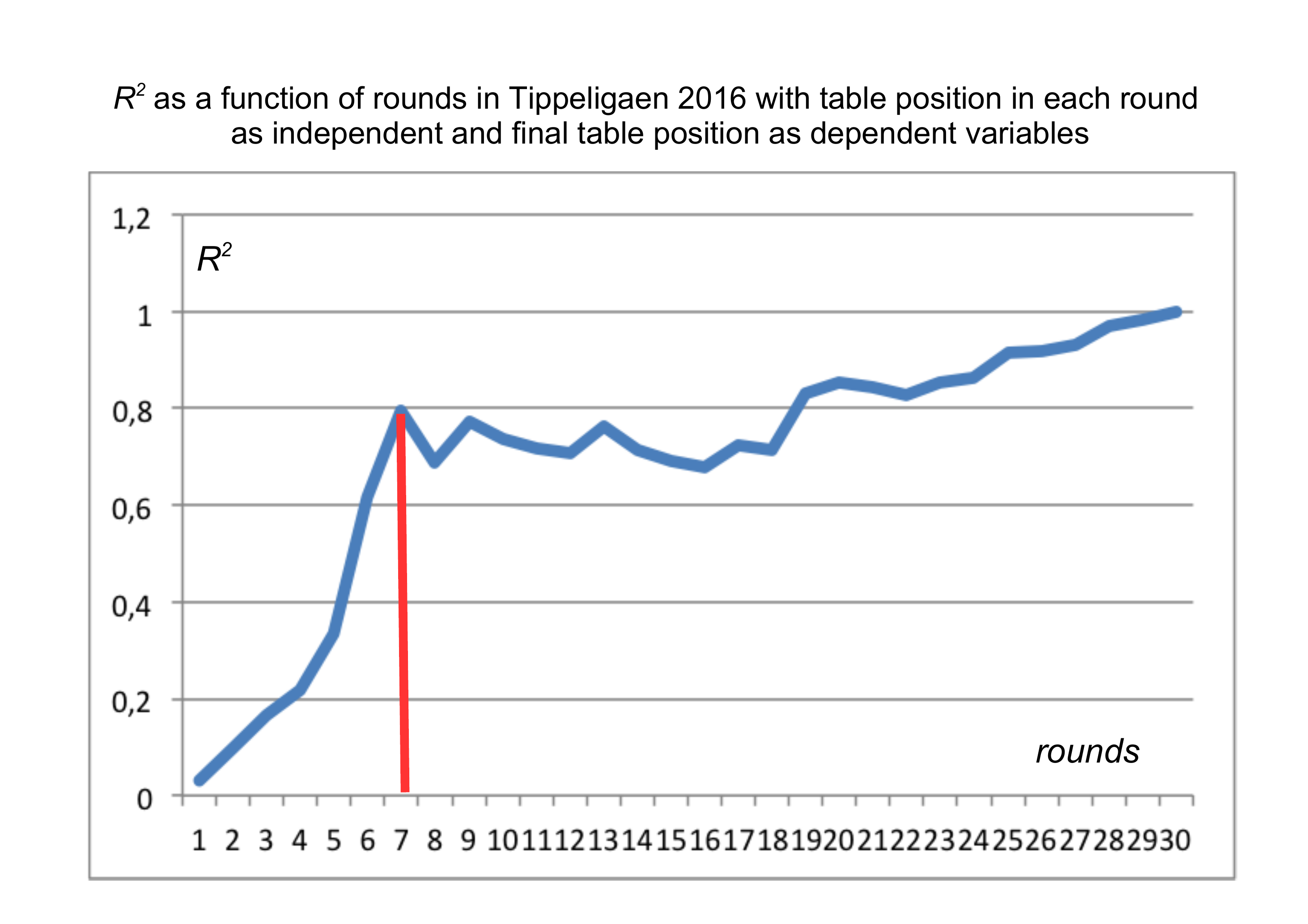}}
\caption{An example of $R^2(r)$ from Tippeligaen 2016}
\label{fig:fig_1}
\end{figure}

Looking at figure~\ref{fig:fig_1}, we observe our predicted pattern of non-strict positive monotonicity in $R^2(r)$. However, we also observe something else: $R^2(r)$ reaches 80\% explanatory power already in round 7. That is, 80\% of the final table is there, already in round 7. Surely, $R^2(r)$ drops slightly in rounds 7 to 19, but this observation indicates that our parsimonious hypothesis actually may be of relevance.

Now, is the table rank the only possible parsimonious alternative? The answer is of course no. A table contains home wins, away wins, points, goal-score to name some potential additional information. Let us focus on goal score. In the start of a season, many teams may have new players, new managers and we may suspect that the full potential of certain good teams may not be revealed in early table rankings. Vice versa, other teams have luck, are riding a wave and take more points than expected.  These, not so good teams, may have a tendency to win even matches by a single goal, but also loose other matches (against very good teams) by many goals. As such, we could suspect that goal difference (at least in earlier rounds) perhaps could hold more and better predictive information than table ranking (or points for that matter). That is, we could hope to observe patterns (of course not as smooth) similar to the ``fish-form'' in figure~\ref{fig:fig_2}.

\begin{figure}[htpb]
\centerline{\includegraphics[scale=0.40]{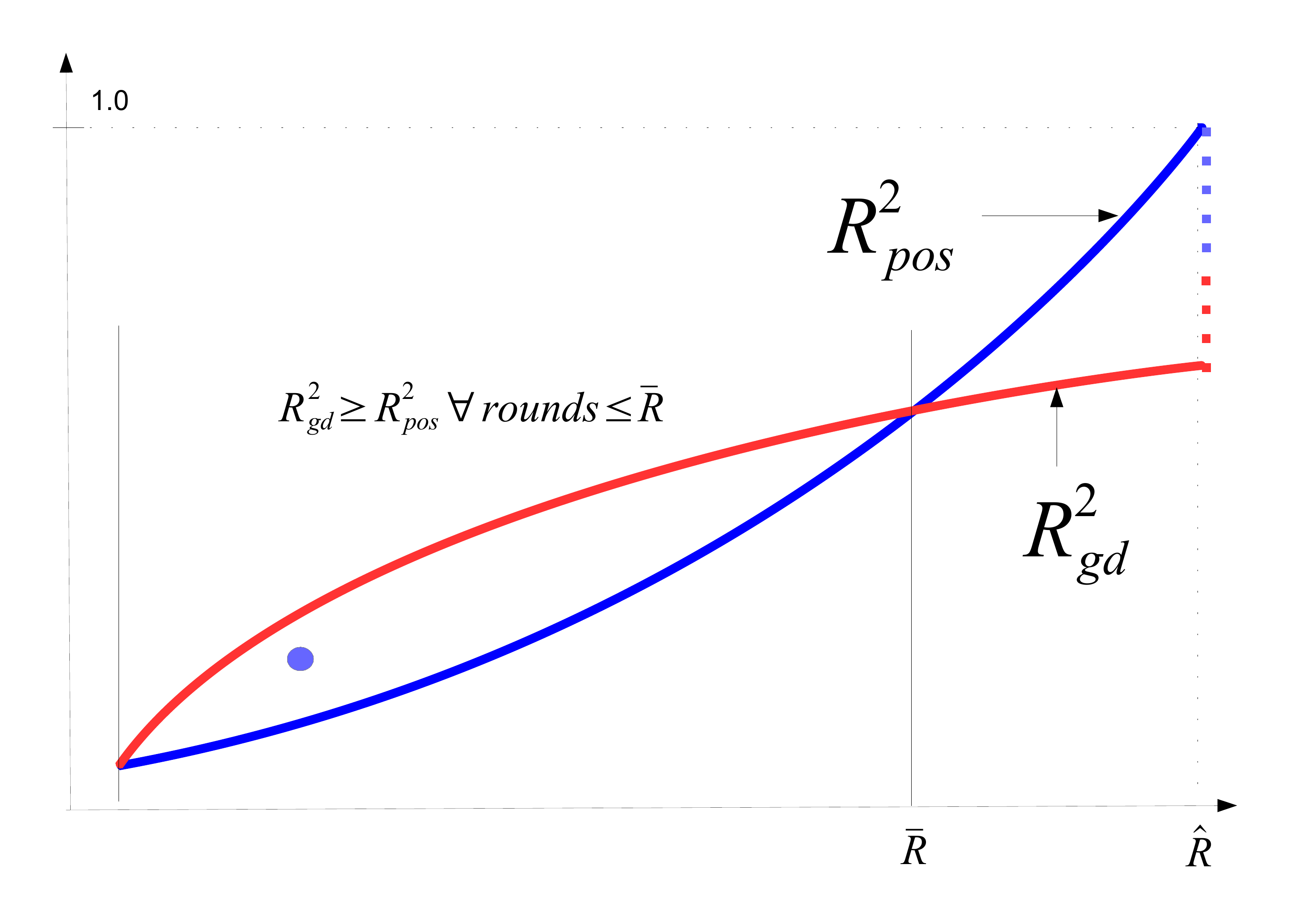}}
\caption{A hypothetical comparison of $R^2(r)$ for regressions with both goal difference ($R^2_{gd}$) and table position ($R^2_{pos}$)as independent variables}
\label{fig:fig_2}
\end{figure}

In order to check these two hypotheses more thoroughly, We examined the Norwegian top league for some previous seasons. We stopped in 2009, as the number of teams changed to 16 (from 14) this year, and performed regressions like those described above with both table rank and goal difference as independent variables. Then 8 $R^2_{pos}(r)$ and 8 $R^2_{gd}(r)$ were generated. The result of this generation is shown in figure~\ref{fig:fig_3}

\begin{figure}[htpb]
\centerline{\includegraphics[scale=0.70]{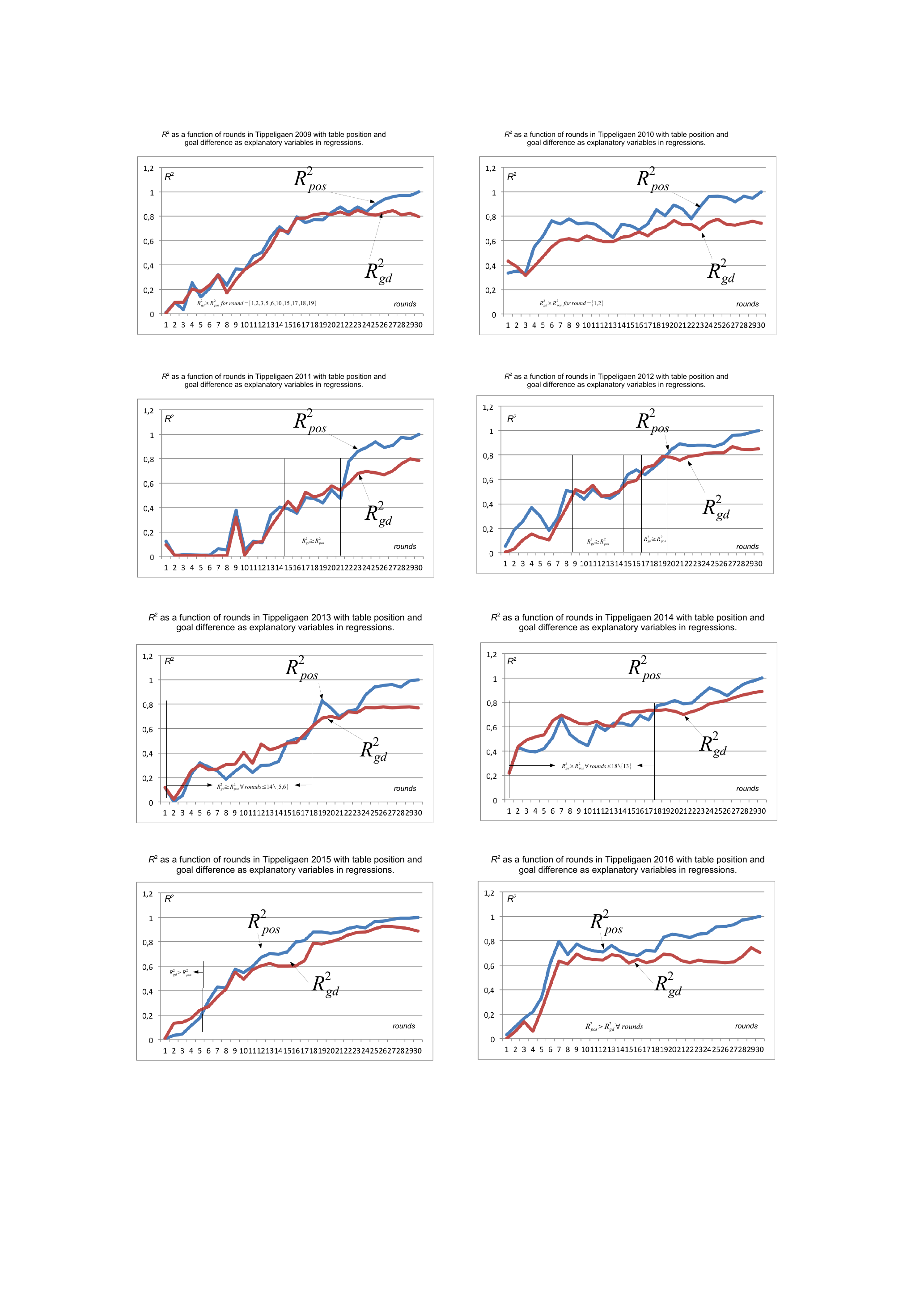}}
\caption{Full output from empirical analysis}
\label{fig:fig_3}
\end{figure}

If we examine figure~\ref{fig:fig_3} we observe that 7 out of 8 seasons have patterns, although perhaps not visually very similar to figure~\ref{fig:fig_2}), where goal difference explain final table better than table rank -- typically early in the season. Furthermore, around 80\% explanatory power ($R^2 > 0.8$) is obtained roughly (for table rank) as early as mid-season for most cases.

Hence, an operative predictive strategy where a table prediction simply could be generated by sorting goal differences in early parts of the season and using the last observed table as the forecast later in the season seems reasonable.

Hence, we have demonstrated that our hypotheses are supported for these Norwegian data. Our results do of course not state anything related other leagues in other countries, but we do believe that similar patters should be observable in most international leagues. 

\section{Conclusions and suggestions for further research}\label{sec:conc}

Apart from the fact that Paul Merson is a good predictor of the final PL-table (or at least he was before the 2016/2017 season), we have demonstrated that table rank or sometimes even better, goal difference explains major parts of final tables early. We have not (actually) checked (empirically) if our prediction method is ``better'' than existing methods in research literature. This is of course feasible, however time consuming. What we without doubt can conclude, is that the methods applied by researchers in football table prediction are far more time and resource consuming than our methods. Simulation experiments take both coding and computing time, and if one is in doubt about whether one approach is better than the other, at least we could recommend to try our approach first.

So, is football table prediction important? Does it contribute to world welfare? Is it really necessary to spend time and resources even addressing this problem? Perhaps not. Still, most modern news agents spend a lot of time and (valuable) space on distributing such predictions each season. And,  as a consequence, some real world demand seems to exist.

In any case, we have examined the problem from our perspective and even found some mathematical results we found interesting. Hopefully, our small effort may inspire other researchers to start the tedious job of empirically testing whether our approach performs better or worse than the simulation-based approaches.

\appendix

\section{Statistical properties of $MAE$}\label{sec:appa}

\subsection{Maximal and minimal values for $MAE$}\label{subsec:proof}
The minimal value of $MAE$ ($MAE_{MIN}$) is obvious. Even though it is unlikely (refer to section~\ref{sec:Int}),  it is possible to guess correctly. In that case, $P(i)=i \forall i$, and by equations~(\ref{eq:MAE}), $MAE=0$.

Let us proceed by investigating the maximal $MAE$. We start by introducing $S(P)$ as:

\begin{equation}\label{eq:abs}
S(P) = \sum_{i=1}^n |P(i)-i|
\end{equation}

then,

\begin{equation}\label{eq:MAEavP}
MAE(P) = \frac{1}{n} S(P)
\end{equation}

Our focus is on solving the optimization problem: $\max_P MAE(P)$. As equation~(\ref{eq:MAEavP}) indicates only a multiplied constant difference, we might as well focus on solving:

\begin{equation}\label{eq:opt}
\max_P S(P)
\end{equation}

\begin{theorem} 
The permutation $P=P_0=[n, n-1, \ldots 2,1]$ will be one (although not necessarily unique) solution to~(\ref{eq:opt}), and the value of the optimal objective (given an even numbered league\footnote{By obvious reasons, we restrict ourselves to investigating even numbered leagues. That is, $n=2m$}) is $\frac{n}{2}$.
\end{theorem}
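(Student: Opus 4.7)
My plan is to first verify the claim at $P_0$ by direct computation, and then prove the upper bound for an arbitrary permutation by a partition argument that exploits the bijection property.

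For the direct evaluation, writing $n = 2m$ and $P_0(i) = n+1-i$, I would observe that $P_0(i) - i = n+1-2i$ runs through the odd integers $n-1, n-3, \ldots, 1, -1, -3, \ldots, -(n-1)$ as $i$ ranges from $1$ to $n$. Summing the absolute values pairs the odd numbers $1, 3, \ldots, n-1$ with their negatives, yielding
\[
S(P_0) \;=\; 2\bigl(1 + 3 + \cdots + (n-1)\bigr) \;=\; 2m^2 \;=\; \frac{n^2}{2},
\]
and hence $MAE(P_0) = n/2$, matching the asserted value.

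For the upper bound on a general permutation, I would split the index set as $A = \{i : P(i) \geq i\}$ and $B = \{i : P(i) < i\}$. Because $P$ is a bijection on $\{1, \ldots, n\}$, the identity $\sum_i (P(i)-i)=0$ forces
\[
\sum_{i \in A}(P(i) - i) \;=\; \sum_{i \in B}(i - P(i)) \;=\; \frac{S(P)}{2}.
\]
Letting $k = |A|$, the set $\{P(i) : i \in A\}$ consists of $k$ distinct elements of $\{1, \ldots, n\}$, so $\sum_{i \in A} P(i)$ is at most the sum of the $k$ largest values, namely $(n-k+1) + \cdots + n$; similarly $\sum_{i \in A} i \geq 1 + \cdots + k$. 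Subtracting gives $S(P)/2 \leq k(n-k)$, which by AM--GM is maximized at $k = n/2$ (achievable because $n$ is even), yielding $S(P) \leq n^2/2 = S(P_0)$.

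The main obstacle I anticipate is recognizing the bijection-based identity that $\sum_{i \in A}(P(i)-i) = S(P)/2$, since this is what reduces an absolute-value optimization into a clean extremal sum problem amenable to a trivial ``largest values minus smallest values'' bound. Once this telescoping is in hand, the remaining steps are routine. I also expect $P_0$ not to be the unique maximizer: any permutation bijecting $\{1, \ldots, n/2\}$ onto $\{n/2+1, \ldots, n\}$ (in any internal order) saturates the inequality, which is consistent with the theorem's parenthetical acknowledgement of non-uniqueness.
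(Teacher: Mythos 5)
Your proposal is correct, and both halves check out: the direct evaluation $S(P_0)=2(1+3+\cdots+(n-1))=2m^2$ is right, and the upper-bound argument is sound, since $\sum_i(P(i)-i)=0$ does force $\sum_{i\in A}(P(i)-i)=\sum_{i\in B}(i-P(i))=S(P)/2$, the extremal-sum bounds give $S(P)/2\le k(n-k)$, and AM--GM finishes. Your route differs from the paper's in the second half. The paper also removes the absolute value by taking $\alpha_i=\max\{P(i),i\}$ and $\beta_i=\min\{P(i),i\}$ (essentially your $A$/$B$ split in different clothing), but then argues via the multiset observation that the $\alpha$'s and $\beta$'s together form two copies of $\{1,\ldots,n\}$, so the ``best'' one can do is put two copies of each of $m+1,\ldots,n$ in the $\alpha$-column, yielding $2\sum_{j=1}^m(m+j)-2\sum_{j=1}^m j=2m^2$ directly, with no free parameter to optimize. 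Your version introduces $k=|A|$ and must maximize $k(n-k)$, which is slightly longer but has a compensating advantage: it is fully rigorous as an upper bound for \emph{every} permutation, whereas the paper's claim about what is ``best achievable'' is asserted rather than justified (it can be made precise by the doubled-multiset argument, but the paper does not spell this out). Your closing remark on non-uniqueness --- that exactly the permutations mapping $\{1,\ldots,m\}$ onto $\{m+1,\ldots,n\}$ and vice versa attain the maximum --- agrees with the paper's characterization and with its later count of $(m!)^2$ maximizers.
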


\begin{proof} Suppose an alternative permutation, say $P_1$ such that $S(P_1) > S(P_0)$ exists. We will show that such a permutation, $P_1$ can not exist. In order to proceed, we introduce a little trick to handle the absolute value in equation~(\ref{eq:abs}). let us illustrate the trick by an example.

$$
 \begin{array}{|c|c|}
 \hline
 P(1) & 1 \\ \hline
 P(2) & 2 \\ \hline
 \vdots & \vdots \\ \hline
 P(n) & n \\ \hline
 \end{array}
 \longrightarrow
 \begin{array}{|c|c|}
 \hline
 \alpha_1 & \beta_1 \\ \hline
 \alpha_2 & \beta_2 \\ \hline
 \vdots & \vdots \\ \hline
 \alpha_n & \beta_n \\ \hline
 \end{array}
 \qquad\text{Example}\quad
 \begin{array}{|c|c|}
 \hline
 6 & 1 \\ \hline
 1 & 2 \\ \hline
 4 & 3 \\ \hline
  5 & 4 \\ \hline
  2 & 5 \\ \hline
  3 & 6 \\ \hline
 \end{array}
 \longrightarrow
 \begin{array}{|c|c|}
 \hline
 6 & 1 \\ \hline
 2 & 1 \\ \hline
 4 & 3  \\ \hline
 5 &  4 \\ \hline
 5 & 2  \\ \hline
 6 & 3 \\ \hline
 \end{array}
 $$

Above (on the left), a certain prediction (or permutation) $[P(1), P(2), \ldots$, P(n)] alongside the correct final table $[1,2, \ldots n]$ is given. The trick involves a certain resort of this leftmost table into the table with $\alpha$'s and $\beta$'s, and is done as follows: $\alpha_i = max\{P(i),i\}$ and $\beta_i = min\{P(i),i\}$. The clue of the trick (the resort) is of course to achieve that $\alpha_i > \beta_i \forall i$, remove the absolute value sign, and hence obtain:

\begin{equation}
S(P) = \sum_{i=1}^n |P(i)-i| = \sum_{i=1}^n \left( \alpha_i -\beta_i \right) = \sum_{i=1}^n \alpha_i - \sum_{i=1}^n \beta_i
\end{equation}

Now, this reformulation leads to a simpler task in solving $\max_P S(P)$, as it can be done by maximising $\sum_{i=1}^m \alpha_i$ and minimising $\sum_{I=1}^n \beta_i$. The ``best'' that can be achieved is to get the column of $\alpha_i$'s to contain two copies of each of the numbers $m+1, \ldots,n$ such that the $\beta_i$-column contains two copies  of the numbers $1, \ldots,m$ (where $m=\frac{n}{2}$). Permutations $P$ that makes this happen are those where $P(i) \geq m+1$ for $i \leq m$ and $P(i) \leq m$ for $i \geq m+1$. That is, collectively, $P$ must ``ship'' the set $\{1,\ldots,m\}$ to $\{m+1,\ldots,n\}$ and vice versa. Or, seen from the predictor's side, if the best half of the teams are predicted to end up on the bottom half of the table (and vice versa): Then, we get:

\begin{equation}
S^*(P) = 2 \sum_{j=1}^m (m+j) - 2\sum_{j=1}^m j = 2m^2 = \frac{1}{2}n^2
\end{equation}

or $MAE= n S^*(P) = \frac{n}{2}$. Luckily, the permutation $P_0$ satisfies this criteria. Reversing the order, predicting the winner to be last, number two to become second last and so on, will with necessity secure that the best $m$ are ``shipped'' to the last $m$ and vice versa.

\end{proof}



\subsection{Deriving the expression for $E[MAE]$}

To derive the expression for $E[MAE]$, we assume uniform distribution of all possible tables (random guess). 
Let $x=(x_1,\ldots,x_n)=(P(1),\ldots,P(n))$ be a permutation. We write $S(x_1,\ldots,d_n)$ for the score, and $p(x_1,\ldots,x_n)=1/n!$ for the probability that $x$ occurs.

The expected value of $S$ is then (by definition):

\begin{align*}
\mathbb{E}[S] &= \sum_{x\in\mathcal{S}_n} S(x_1,\ldots,x_n)p(x_1,\ldots,x_n)
= \sum_{i=1}^n  \sum_{x\in\mathcal{S}_n} |x_i - i| p(x_1,\ldots,x_n)\\
&= \sum_{i=1}^n \sum_{x_i=1}^n |x_i-i| \sum_{y\in\mathcal{S}_{n-1}} p(y_1,\ldots,y_{i-1},x_i,y_{i},\ldots,y_{n-1})
\end{align*}

In the innermost (right) summation, we sum over all permutations$\{1,\ldots,n\}\backslash \{x_i\}$. That is, we leave out $x_i$ fom the numbers in all permutations, $\{1,\ldots,x_i-1,x_{i}+1,\ldots,n\}$. Then, the innermost summation contains all permuations of $(n-1)$ numbers. Hence, there are $(n-1)!$ of them. We get:

\begin{align*}
\mathbb{E}[S]&=\sum_{i=1}^n \sum_{x_i=1}^n |x_i-i| \frac{(n-1)!}{n!}
=\frac1{n}\sum_{i=1}^n\left(\sum_{x_i=1}^{i-1}(i-x_i)+\sum_{x_i=i+1}^{n}(x_i-i)\right)\\
&= \frac1n\sum_{i=1}^n (f(i)+f(n+1-i))=\frac2n\sum_{i=1}^n f(i),\quad\text{hvor}\ f(x)=\frac12 x(x-1)
\end{align*}

The last summation is straightforward to find, and we get:

\begin{equation}
\mathbb{E}[S]=\frac1n\sum_{i=1}^n i(i-1)=\frac13 (n^2-1) 
\end{equation}

and hence $E(MAE] =  \frac{1}{3} \cdot \frac{n^2-1}{n}$. Similarly, we can also find the variance:

\begin{equation}
\text{Var}[S] = \frac{1}{45} (n+1)(2n^2+7),\quad 
\text{Var}[\text{MAE}] = \frac{1}{45} \frac{(n+1)(2n^2+7)}{n^2}.\quad 
\end{equation}




\subsection{Some final mathematical remarks}

If we revert back to the final paragraph of subsection~\ref{subsec:proof}, it should to be reasonably straightforward to realize that the mentioned criteria (for achieving maximal $MAE$) can be achieved in exactly $(m!)^2$ ways. (Take all $M!$ permutations of $[1,\ldots,m]$ and add $m$, for any such permutation, take all $m!$ permutations and subtract $m$.) Since there are $n! = (2m)!$ permutations totally, the probability of obtaining the worst possible guess can be calculated as:

\begin{equation}
P\left( \mbox{Getting $MAE_{MAX}$ by guessing} \right) = \frac{(m!)^2}{(2m)!} = \left(\begin{array}{c} 2m\\ m\end{array}\right)^{-1} = \left(\begin{array}{c} n\\ n/2\end{array}\right)^{-1}
\end{equation}

This outcome is unlikely. In the PL-case: 

\begin{equation}
\left(\begin{array}{c} n\\ n/2\end{array}\right)^{-1} = \left(\begin{array}{c} 20\\ 10 \end{array}\right)^{-1} = \frac{1}{184756} \approx 5.4 \cdot 10^{-6}
\end{equation}

Still, far from as unlikely as guessing the correct table, as indicated in section~\ref{sec:Int}.

\bibliographystyle{plain}

\end{document}